\newtheorem{theorem}{Theorem}[section]
\def\@biblabel#1{}
\theoremstyle{plain}
\theoremstyle{definition}
\theoremstyle{remark}
\newcommand{\E}{\mathsf{E}}
\newcommand{\V}{\mathsf{V}}
\title{Estimating a function of the scale parameter in a gamma distribution with bounded variance} \author{ Jun Hu\footnote{Jun Hu is an Assistant Professor in the Department of Mathematics and Statistics, Oakland University, 146 Library Drive, Rochester, MI 48309, USA. Tel.:~248-370-3434. Email address: junhu@oakland.edu.} \; Ibtihal Alanazi\footnote{Ibtihal Alanazi is a Ph.D. student in the Department of Mathematics and Statistics, Oakland University, 146 Library Drive, Rochester, MI 48309, USA. Email address: ialanazi@oakland.edu.} \; Zhe Wang\footnote{Zhe Wang is an Assistant Professor in the Data Analytics Program at Denison University, 100 West College Street, Granville, Ohio 43023, USA. Email address: wangz@denison.edu.} }
\date{}
\begin{document}

\maketitle

\begin{abstract}

Given a gamma population with a known shape parameter $\alpha$, we develop a general theory for estimating a function $g(\cdot)$ of the scale parameter $\beta$ with bounded variance. We begin by defining a sequential sampling procedure with $g(\cdot)$ satisfying some desired condition in proposing the stopping rule, and show the procedure enjoys appealing asymptotic properties. After these general considerations, we substitute $g(\cdot)$ with specific functions including the gamma mean, the gamma variance, the gamma rate parameter, and a gamma survival probability as four possible illustrations. For each illustration, Monte Carlo simulations are carried out to justify the remarkable performance of our proposed sequential procedure. This is further substantiated by two real data studies on the survival times of female dementia patients and on the number of days that the seeds of marigold need to flower, respectively.    

\smallskip

\emph{Keywords and phrases:} bounded variance, gamma distribution, gamma mean, gamma rate, gamma variance, sequential estimation, survival probability
\end{abstract}

\setcounter{section}{0}
\setcounter{equation}{0}
\section{Introduction}

The gamma distribution, as one of the most commonly used probability distributions for modeling data, plays an important role in many areas such as business management, environmental science, reliability engineering, and medical science. For example, \cite{Nenes et al. (2010)} considered modeling the demand for commercial items employing a gamma distribution with a probability mass at zero; \cite{Rohan et al. (2015)} investigated the use of the gamma distribution to determine the half-life of rotenone applied in freshwater; \cite{Ibrahim et al. (2019)} applied the gamma distribution to evaluate and predict the reliability and the projected lifetime of phosphor-converted white LEDs based on luminous flux degradation; and \cite{Cui et al. (2023)} proposed a three-layer feature selection combined with a gamma distribution-based generalized linear model for anticancer drug response prediction.   

It is safe to say that there is a large volume of articles on the estimation of the gamma distribution parameters, most of which are based on fixed-sample-size procedures. In a lot of statistical inference problems, however, requirements on some predetermined accuracy (e.g., confidence interval estimation with a fixed width, point estimation with a bounded risk, testing hypotheses with controlled type-I and type-II errors, etc.) leads to the nonexistence of such fixed-sample-size procedures. As a consequence, sequential or multistage sampling procedures are necessary. In the context of gamma parameters estimation, \cite{Takada and Nagata (1995)} constructed the fixed-width sequential confidence interval for the mean of a gamma population; \cite{Isogai and Uno (1995)} considered the minimum risk point estimation problem for a gamma mean, and proposed a sequential sampling procedure shown to be asymptotically better than the one given by \cite{Woodroofe (1977)}; \cite{Zacks and Khan (2011)} developed both two-stage and purely sequential estimation procedures for fixed-width confidence interval of the scale parameter of a gamma distribution when the shape parameter is known; \cite{Mahmoudi and Roughani (2015)} and \cite{Roughani and Mahmoudi (2015)} subsequently investigated the bounded risk estimation problem for the gamma scale parameter in a two-stage sampling procedure; \cite{Mahmoudi et al. (2019)} further studied the same problem in a purely sequential sampling procedure; \cite{Zhuang et al. (2020)} established sequential fixed-accuracy confidence intervals for the survival function of a gamma distribution; and \cite{Bapat (2022)} obtained novel two-stage sampling procedures to estimate the ratio and the sum of shape parameters coming from two independent gamma populations.     

In a parallel direction, there is also interest in estimating a function of the parameter(s) sequentially to make the inference problem more practical. To list a few, \cite{Uno et al. (2004)} put forward sequential point estimation of a function of the exponential scale parameter subject to squared error loss plus a linear cost; \cite{Mukhopadhyay and Wang (2019)} developed a general theory of sequential minimum risk point estimation of a function of a normal mean; \cite{Banerjee and Mukhopadhyay (2021)} designed multistage minimum risk point estimation strategies for a function of a normal mean; \cite{Mukhopadhyay and Li (2022)} proposed a purely sequential minimum risk point estimation procedure for a survival function in an exponential distribution; and most recently, \cite{Mukhopadhyay and Li (2024)} formulated a general multistage sampling methodology for estimating a function of an unknown gamma scale parameter $\beta$ with a known shape parameter $\alpha$ under a squared error loss function plus the cost of sampling. For a broad-ranging review of sequential analysis as a powerful tool, one may refer to many resources including the following monographs: Anscombe(\citeyear{Anscombe (1952)}, \citeyear{Anscombe (1953)}), \cite{Robbins (1959)}, \cite{Chow and Robbins (1965)}, \cite{Siegmund (1985)}, \cite{Ghosh et al. (1997)}, and \cite{Mukhopadhyay and de Silva (2009)}.   

In this paper, we examine a sequential approach to tackling the so-called \textit{bounded variance point estimation} (BVPE) problem for a function $g(\cdot)$ of the scale parameter $\beta$ of a gamma distribution, assuming the shape parameter $\alpha$ is known to us. That is, the sequential methodology is devised to constrain the variance of the estimator to a prescribed bound $b(>0)$. BVPE can be applied in many fields such as actuarial science and medical science, showing its practical applicability. One may see \cite{Hu and Hong (2022)} and \cite{Hu and Zhuang (2022)} for more details. 

The remainder of the paper is organized as follows. In Section \ref{Proc}, we lay down some useful preliminaries, establish a general theory, and demonstrate the appealing properties of sequential BVPE for a function of the gamma scale parameter. In Section \ref{Illu}, we exhibit four illustrations where the general framework developed in Section \ref{Proc} is readily applied, namely, the gamma mean, the gamma variance, the gamma rate parameter, and the survival probability. Extensive sets of Monte Carlo simulations are also carried out to verify the theoretical findings empirically. Section \ref{RDA} includes real data analysis to showcase the possibility of incorporating our procedure in real life problems. The paper is wrapped up with some brief overall thoughts in Section \ref{Conc}.

\setcounter{equation}{0}
\section{Preliminaries, sequential estimation, and properties}\label{Proc}

Suppose that $X_1,X_2,...$ are independent and identically distributed (i.i.d.) observations from a gamma population $\Gamma(\alpha,\beta)$ with the associated probability density function (pdf) given by
$$f(x;\alpha,\beta) = \frac{1}{\Gamma(\alpha)\beta^{\alpha}}x^{\alpha-1}e^{-x/\beta}, \ \text{ for } x>0.$$
Here, the scale parameter $\beta(>0)$ is of interest and remains unknown, while the shape parameter $\alpha(>0)$ is \textit{known} to us. We desire to estimate $g(\beta)$, a \textit{strictly monotonic} function of the scale parameter. Since $\beta$ takes only positive values, we let the function $g(\cdot)$ be defined on the positive real line for simplicity; and assume that for all $x \in \mathcal{R}^{+}$, $g(x)$ is twice continuously differentiable. Denote the first two derivatives by $g^{\prime}(x)$ and $g^{\prime\prime}(x)$, respectively. In the spirits of \cite{Mukhopadhyay and Wang (2019)} and \cite{Mukhopadhyay (2021)}, we adopt a similar but relatively relaxed condition as follows:
\begin{equation}\label{C1}
\left|g^{\prime\prime}(x)\right| \le \sum_{j=1}^{d}a_{j}x^{k_{j}},
\end{equation}
where $d(\ge1)$ is a fixed integer, $a_{j}(\ge0)$ and $k_{j}$ (not necessarily nonnegative) are appropriate fixed real numbers, for $j=1,...,d$. One may wonder if the condition \eqref{C1} is too strong to be practical. In fact, there are a lot of common and meaningful functions of $\beta$ regarding a gamma distribution that satisfy this condition. Here are some typical examples.
\begin{enumerate}
	\item[(i)] The mean of a gamma distribution $g(\beta)=\alpha\beta$:
	\begin{equation*}
	\begin{split}
	g^{\prime}(\beta) = \alpha>0, \ |g^{\prime\prime}(\beta)| = 0,
	\end{split}
	\end{equation*}
	where $d=1,a_{1}=0$, and $k_{1}=0$.
	\item[(ii)] The variance of a gamma distribution $g(\beta)=\alpha\beta^2$:
	\begin{equation*}
	\begin{split}
	g^{\prime}(\beta) = 2\alpha\beta>0, \ |g^{\prime\prime}(\beta)| = 2\alpha,
	\end{split}
	\end{equation*}
	where $d=1,a_{1}=2\alpha$, and $k_{1}=0$.
	\item[(iii)] The rate parameter of a gamma distribution $g(\beta)=\beta^{-1}$:
	\begin{equation*}
	\begin{split}
	g^{\prime}(\beta) = -\beta^{-2}<0, \ |g^{\prime\prime}(\beta)| = 2\beta^{-3},
	\end{split}
	\end{equation*}
	where $d=1,a_{1}=2$, and $k_{1}=-3$.
	\item[(iv)] A survival probability of a gamma distribution $g(\beta)=\Pr(X>c)=\int_{c/\beta}^{\infty}\frac{1}{\Gamma(\alpha)}x^{\alpha-1}e^{-x}dx$ with $c>0$ being an appropriate constant:
	\begin{equation*}
	\begin{split}
	g^{\prime}(\beta) &= -\frac{c^{\alpha}}{\Gamma(\alpha)}\beta^{-(\alpha+1)}e^{-c/\beta}<0,\\ 
	g^{\prime\prime}(\beta) &= \frac{c^{\alpha}}{\Gamma(\alpha)}\beta^{-(\alpha+2)}e^{-c/\beta}(\alpha+1+c\beta^{-1})\\
	&\Rightarrow |g^{\prime\prime}(\beta)|\le \frac{c^{\alpha}(\alpha+1)}{\Gamma(\alpha)}\beta^{-(\alpha+2)}+\frac{c^{\alpha+1}}{\Gamma(\alpha)}\beta^{-(\alpha+3)},
	\end{split}
	\end{equation*}
	where $d=2,a_{1}=\frac{c^{\alpha}(\alpha+1)}{\Gamma(\alpha)},k_{1}=-(\alpha+2),a_{2}=\frac{c^{\alpha+1}}{\Gamma(\alpha)}$, and $k_{2}=-(\alpha+3)$.
\end{enumerate}

Having recorded $X_1,...,X_n,n\ge1$, a maximum likelihood estimator (MLE) of $\beta$ is given by $\hat{\beta}_n = \alpha^{-1}\bar{X}_n$, where $\bar{X}_n=n^{-1}\sum_{i=1}^{n}X_i$ denotes the sample mean. According to the invariant property, a natural MLE of $g(\beta)$ is given by $g(\hat{\beta}_n)$. As for its variance denoted by $\V[g(\hat{\beta}_n)]$, in general, it may not be easy to find an explicit expression. Hence, we employ the delta method to obtain an approximation, which is valid as $g(\cdot)$ has a continuous second derivative. Applying Taylor's theorem, we have 
\begin{equation*}
g(\hat{\beta}_n)-g(\beta) = g^{\prime}(\beta)(\hat{\beta}_n-\beta)+\frac{1}{2}g^{\prime\prime}(\xi_n)(\hat{\beta}_n-\beta)^2,
\end{equation*}
where $\xi_n$ is a random variable lying between $\hat{\beta}_n$ and $\beta$. Then, 
\begin{equation}\label{delta}
\V[g(\hat{\beta}_n)] = \frac{\{g^{\prime}(\beta)\}^2\beta^2}{\alpha n} + \frac{1}{4}\V[g^{\prime\prime}(\xi_n)(\hat{\beta}_n-\beta)^2] + g^{\prime}(\beta)\textsf{Cov}(\hat{\beta}_n-\beta,g^{\prime\prime}(\xi_n)(\hat{\beta}_n-\beta)^2).
\end{equation}
Under the condition \eqref{C1},
\begin{equation*}
\begin{split}
|{g^{\prime\prime}(\xi_n)}| &\le \sum_{j=1}^{d}a_{j}\xi_n^{k_{j}} \le \sum_{j=1}^{d}a_{j}(\alpha^{-1}\bar{X}_n+\beta)^{k_{j}}.
\end{split}
\end{equation*}
As $\bar{X}_n \sim \Gamma(n\alpha,\beta/n)$, it is clear that $\bar{X}_n^{pk_{j}}$ is integrable given $pk_{j}+n\alpha>0$. Therefore, $\{g^{\prime\prime}(\xi_n)\}^p$ is integrable when 
\begin{equation}\label{momcond}
p\min_{j\in\{1,...,d\}}k_{j}+n\alpha>0.
\end{equation}
Suppose there exists a $p=2p_0>2$ satisfying \eqref{momcond}. Then, by H{\"o}lder's inequality, 
\begin{equation}\label{delta1}
\begin{split}
\V[g^{\prime\prime}(\xi_n)(\hat{\beta}_n-\beta)^2] &\le \E^{1/p_0}[\{g^{\prime\prime}(\xi_n)\}^{2p_0}]\cdot \E^{1/q_0}[(\hat{\beta}_n-\beta)^{4q_0}]\\
&= O(1)O(n^{-2}) = O(n^{-2}),
\end{split}
\end{equation}
where $1/p_0+1/q_0=1$. In a similar fashion,
\begin{equation}\label{delta2}
\begin{split}
\textsf{Cov}(\hat{\beta}_n-\beta,g^{\prime\prime}(\xi_n)(\hat{\beta}_n-\beta)^2) &\le \V^{1/2}[\hat{\beta}_n-\beta] \cdot \V^{1/2}[g^{\prime\prime}(\xi_n)(\hat{\beta}_n-\beta)^2]\\
&= O(n^{-1/2})O(n^{-1}) = O(n^{-3/2}),
\end{split}
\end{equation}
Putting together \eqref{delta}, \eqref{delta1} and \eqref{delta2}, we arrive at
\begin{equation}\label{Vapprox}
\V[g(\hat{\beta}_n)] = \frac{\{g^{\prime}(\beta)\}^2\beta^2}{\alpha n} + O(n^{-3/2}).
\end{equation}
This suggests that the delta method gives a good approximation of $\V[g(\hat{\beta}_n)]$ with an error up to the order $O(n^{-3/2})$, assuming the condition \eqref{C1}, and that \eqref{momcond} holds for some $p>2$. Now, our goal is to control the variance of the estimator $g(\hat{\beta}_n)$ under a predetermined small level $b(>0)$, that is,
\begin{equation}\label{bv}
\V[g(\hat{\beta}_n)] \le b.
\end{equation} 
Omitting the small quantity $O(n^{-3/2})$ in \eqref{Vapprox}, we have that 
\begin{equation}\label{oss}
\V[g(\hat{\beta}_n)] \approx \frac{\{g^{\prime}(\beta)\}^2\beta^2}{\alpha n} \le b \Rightarrow n \ge \frac{\{g^{\prime}(\beta)\}^2\beta^2}{\alpha b} = n^{*}, \text{ say.}
\end{equation}
Here, $n^*$ is defined as the \textit{optimal sample size} whose magnitude remains unknown, though, since it involves the unknown scale parameter $\beta$. A sequential sampling procedure can be implemented to solve this estimation problem, where one estimates $\beta$ by updating its MLE at every stage as needed.

Beginning with a pilot sample of size $m(\ge1)$, $X_1,...,X_m$, we propose the following sequential procedure which leads to a final sample size required for bounding the variance of the estimator. The associated stopping rule is given by
\begin{equation}\label{SR}
N = N(b) = \inf\left\{ n\ge m: n \ge \frac{\{g^{\prime}(\hat{\beta}_n)\}^2\hat{\beta}_n^2}{\alpha b} \right\},
\end{equation}  
where $\hat{\beta}_n = \bar{X}_n/\alpha$. Concerning the implementation of \eqref{SR}, if $m \ge \frac{\{g^{\prime}(\hat{\beta}_m)\}^2\hat{\beta}_m^2}{\alpha b}$
is already satisfied, we do not take any additional observations, and the final sample size is $N = m$. Otherwise, we collect one more observation $X_{m+1}$ and update $\hat{\beta}_{m+1}$ to check with the stopping rule \eqref{SR}. Sampling is then terminated at the first time $N = n(\ge m)$ such that $n \ge \frac{\{g^{\prime}(\hat{\beta}_n)\}^2\hat{\beta}_n^2}{\alpha b}$ occurs. At last, with the fully accrued data ${X_1, ..., X_m, ..., X_N}$, we construct the bounded variance
point estimator for $g(\beta)$ by
\begin{equation}\label{BVPE}
g(\hat{\beta}_N) = g(\bar{X}_N/\alpha).
\end{equation}
Obviously, we can claim that $\Pr\{N<\infty\}=1$ and $N \to \infty$ with probability 1 (w.p.1) as $b\to0$.

The sequential estimation procedure \eqref{SR} is efficient in terms of the final sample size required, as its expected value will be close to the optimal sample size $n^*$ defined in \eqref{oss} in some certain manner. In, we conclude the so-called asymptotic first-order efficiency in Theorem \ref{Thm1} generally, and second-order analysis may be available according to the specific function of $g(\cdot)$ under consideration. A few typical illustrations will be provided in Section \ref{Illu}.   

\begin{theorem}\label{Thm1}
Assume the condition \eqref{C1}, and that $p\min_{j\in\{1,...,d\}}k_{j}+m\alpha>0$ for some $p>2$. For the sequential estimation procedure \eqref{SR}, we have that as $b \to 0$,
\begin{equation}\label{eff}
\E[N/n^*] = 1 + o(1).
\end{equation}
\end{theorem}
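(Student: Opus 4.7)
The plan is to establish the two standard ingredients for first-order efficiency of a Chow-Robbins type stopping rule: an almost-sure limit $N/n^* \to 1$ and uniform integrability of the family $\{N/n^*: b>0\}$. Together, these yield $\E[N/n^*]\to 1$ by the Vitali convergence theorem, which is precisely \eqref{eff}.

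First I would pin down the almost-sure limit. The rule \eqref{SR} produces the two-sided sandwich
$$\frac{\{g^{\prime}(\hat{\beta}_N)\}^2\hat{\beta}_N^2}{\alpha b} \le N, \qquad N-1 < \frac{\{g^{\prime}(\hat{\beta}_{N-1})\}^2\hat{\beta}_{N-1}^2}{\alpha b} \text{ whenever } N > m.$$
Since $\Pr\{N<\infty\}=1$ and $N\to\infty$ w.p.1 as $b\to 0$ (noted right after \eqref{BVPE}), and $\hat{\beta}_n\to\beta$ w.p.1 by the SLLN, an Anscombe-type random-index argument gives $\hat{\beta}_N,\hat{\beta}_{N-1}\to\beta$ w.p.1. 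Dividing the sandwich by $n^* = \{g^{\prime}(\beta)\}^2\beta^2/(\alpha b)$ and invoking continuity of $g^{\prime}$ forces $N/n^* \to 1$ w.p.1, and Fatou's lemma immediately yields $\liminf_{b\to 0}\E[N/n^*]\ge 1$.

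For the matching upper bound I would verify uniform integrability directly. Fix $\eps>0$, let $n_{\eps}=\lceil (1+\eps)n^*\rceil$, and write
$$\E[N] \le n_{\eps} + \sum_{n\ge n_{\eps}} \Pr(N > n).$$
The event $\{N>n\}$ forces $\{g^{\prime}(\hat{\beta}_n)\}^2\hat{\beta}_n^2 > n\alpha b$, and for $n\ge n_{\eps}$ this is a deviation of the statistic $\{g^{\prime}(\hat{\beta}_n)\}^2\hat{\beta}_n^2$ from its limiting value $\{g^{\prime}(\beta)\}^2\beta^2$ by a fixed relative amount $\eps$. I would control this tail by combining (i) the polynomial bound \eqref{C1} on $g^{\prime\prime}$, which after integration yields polynomial growth for $\{g^{\prime}(x)\}^2 x^2$, (ii) the exact distribution $\bar{X}_n \sim \Gamma(n\alpha,\beta/n)$ and the integrability guaranteed by \eqref{momcond}, already harnessed in \eqref{delta1}-\eqref{delta2}, and (iii) Markov's inequality at a sufficiently high power so that $\sum_{n\ge n_{\eps}}\Pr(N>n)=o(n^*)$. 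Dividing by $n^*$ and letting $\eps\downarrow 0$ gives $\limsup_{b\to 0}\E[N/n^*]\le 1$, completing \eqref{eff}.

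The main obstacle is this tail step. Condition \eqref{C1} may involve negative exponents $k_j$, so $g^{\prime}$ itself can blow up near $0$ (compare example (iii), where $g^{\prime}(\beta)=-\beta^{-2}$); the role of the hypothesis $p\min_j k_j + m\alpha > 0$ for some $p>2$ is precisely to ensure that the gamma moments of $\{g^{\prime}(\hat{\beta}_n)\}^2\hat{\beta}_n^2$ exist uniformly in $n\ge m$ and decay fast enough in the tail of $\hat{\beta}_n$ to be summed against $n^*=O(b^{-1})$. Sharpening this tail bound to an exponential rate is what would be needed for the second-order analysis the authors defer to Section \ref{Illu}.
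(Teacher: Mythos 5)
Your proposal is correct and follows essentially the same route as the paper: almost-sure convergence $N/n^*\to 1$ (which the paper obtains by casting \eqref{SR} in the Chow--Robbins form $N=\inf\{n\ge m: y_n\le f(n)/n^*\}$ and citing their Lemma 1, while you derive it directly from the stopping-rule sandwich) combined with a uniform-integrability argument based on Markov's inequality applied to the event $\{N>n\}\subseteq\{\{g^{\prime}(\hat{\beta}_n)\}^2\hat{\beta}_n^2>n\alpha b\}$, using the gamma moment structure and the hypothesis $p\min_j k_j+m\alpha>0$. The paper's tail estimate $c\Pr(N>cn^*)+\int_c^\infty\Pr(N>xn^*)\,dx=O(c^{1-s})$ is just the integral form of your sum $\sum_{n\ge n_\eps}\Pr(N>n)$, so the two arguments are interchangeable.
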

\begin{proof}
For $n=m,m+1,...$, define 
$$y_n = \left[\frac{g^{\prime}(\hat{\beta}_n)\hat{\beta}_n}{g^{\prime}(\beta)\beta}\right]^2.$$
Since $\hat{\beta}_n$ is the MLE of $\beta$ and $g(\cdot)$ is a continuous function, then $\lim_{n\to\infty}y_n=1$ w.p.1. Let $f(n)=n$ so it is trivial that $\lim_{n\to\infty}f(n)=\infty$ and $\lim_{n\to\infty}f(n)/f(n-1)=\infty$. The stopping rule \eqref{SR} can be rewritten as
$$N = \inf\{ n \ge m: y_n \le f(n)/n^* \},$$
which matches that of \cite{Chow and Robbins (1965)}. Hence, $\lim_{n\to\infty}n^{*-1}N=1$ w.p.1 by their Lemma 1. It suffices to show the uniform integrability of $\{n^{*-1}N:0<b\le1\}$ along the lines of \cite{Ghosh and Mukhopadhyay (1980)} and \cite{Hu and Mukhopadhyay (2019)}. 

Write for any $c>0$,
\begin{equation}\label{t1}
\E\left[ \{N/n^*\}\mathbb{I}(N>cn^*) \right] = c\Pr(N>cn^*) + \int_{c}^{\infty}\Pr(N>xn^*)dx,
\end{equation}
where $\mathbb{I}(A)$ represents the indicator function of an event $A$. Let $t = \lfloor xn^* \rfloor$, where $\lfloor u \rfloor$ indicates the largest integer that is strictly smaller than $u(>0)$. Select $b_0\le1$ such that $c > c_0(\ge 2)$ for small $b \le b_0$ and $g^{\prime}(\hat{\beta}_{n})/g^{\prime}(\beta)<2$ w.p.1 for $n \ge \lfloor cn^* \rfloor$. Then,
\begin{equation}
\begin{split}
\Pr(N>xn^*) & \le \Pr(N>t) \le \Pr\left( \left[\frac{g^{\prime}(\hat{\beta}_t)\hat{\beta}_t}{g^{\prime}(\beta)\beta}\right]^2 \ge t \right)\\
& \le \Pr\left( |\hat{\beta}_t-\beta| \ge (t^{1/2}/2-1)\beta \right)\\
& \le \frac{\E[|\hat{\beta}_t-\beta|^{s}]}{[(t^{1/2}/2-1)\beta]^{s}}=O(x^{-s}), 
\end{split}
\end{equation}
for some appropriate $s>1$. It leads to $c\Pr(N>cn^*)=O(c^{1-s})$ and $\int_{c}^{\infty}\Pr(N>xn^*)dx=O(c^{1-s})$ for sufficiently large $c$. From \eqref{t1}, we can claim that $\{n^{*-1}N:0<b\le1\}$ is uniformly integrable, and therefore \eqref{eff} stands.
\end{proof}

As our ultimate target is to bound the variance of the point estimator of $g(\beta)$, it is anticipated that $\V[g(\hat{\beta}_N)]$ remains below the prescribed small level $b>0$ or approximately $b$. The next theorem demonstrates that the sequential estimation procedure \eqref{SR} yields an estimator whose variance is close to $b$ in some certain way.

\begin{theorem}\label{Thm2}
Assume the condition \eqref{C1}, and that $p\min_{j\in\{1,...,d\}}k_{j}+m\alpha>0$ for some $p>2$. For the sequential estimation procedure \eqref{SR}, we have that as $b \to 0$,
\begin{equation}\label{avar}
\V[g(\hat{\beta}_N)] = b + O(b^{3/2}).
\end{equation}
\end{theorem}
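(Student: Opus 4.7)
The plan is to mirror the deterministic-sample-size calculation that produced \eqref{Vapprox}, but adapted to the random stopping time $N$. First I would apply Taylor's theorem to write
\begin{equation*}
g(\hat{\beta}_N) - g(\beta) = g^{\prime}(\beta)(\hat{\beta}_N - \beta) + \tfrac{1}{2}g^{\prime\prime}(\xi_N)(\hat{\beta}_N - \beta)^2,
\end{equation*}
with $\xi_N$ lying between $\hat{\beta}_N$ and $\beta$, and then expand $\V[g(\hat{\beta}_N)]$ as a sum of the same three contributions that appear in \eqref{delta}: a leading piece $\{g^{\prime}(\beta)\}^2\V[\hat{\beta}_N-\beta]$, a second-order variance $\tfrac{1}{4}\V[g^{\prime\prime}(\xi_N)(\hat{\beta}_N-\beta)^2]$, and a covariance cross-term.

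The core task is to show $\V[\hat{\beta}_N-\beta] = \beta^2/(\alpha n^*) + O(n^{*-3/2})$. I would handle this by combining Anscombe's random central limit theorem for $\bar{X}_N$ with uniform integrability of $N/n^*$ and, importantly, of its reciprocal $n^*/N$. The former was already established in the proof of Theorem \ref{Thm1}; the latter follows by a parallel tail argument bounding $\Pr(N<cn^*)$ for small $c>0$ using the same Markov-type device on $\hat{\beta}_n-\beta$. Moment convergence then yields $\E[N(\hat{\beta}_N-\beta)^2] = \beta^2/\alpha + o(1)$, and a classical second-order refinement along the lines of \cite{Ghosh and Mukhopadhyay (1980)} and \cite{Mukhopadhyay and de Silva (2009)} upgrades this to $\E[(\hat{\beta}_N-\beta)^2] = \beta^2/(\alpha n^*) + O(n^{*-3/2})$. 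Multiplying by $\{g^{\prime}(\beta)\}^2$ and substituting $n^* = \{g^{\prime}(\beta)\}^2\beta^2/(\alpha b)$, the leading contribution reduces to $b + O(b^{3/2})$.

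For the two remaining pieces I would recycle the reasoning of \eqref{delta1} and \eqref{delta2} with $n$ replaced by $N$. The condition \eqref{C1} combined with the moment hypothesis \eqref{momcond} at $n=m$ furnishes uniform integrability of $\{g^{\prime\prime}(\xi_N)\}^{2p_0}$ in $b$, while a H\"older/Cauchy--Schwarz step combined with the moment estimate $\E[(\hat{\beta}_N-\beta)^{2r}] = O(n^{*-r})$ (itself read off from the preceding uniform-integrability analysis) produces a variance contribution of order $O(n^{*-2})$ and a covariance contribution of order $O(n^{*-3/2})$. Since $n^{*-1}$ is proportional to $b$, both are absorbed into the $O(b^{3/2})$ remainder.

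The principal obstacle I foresee is the rigorous passage from distributional to moment convergence. Unlike the deterministic case, $\bar{X}_N$ conditional on $N$ is not $\Gamma(N\alpha,\beta/N)$-distributed, so the exact moment formulas used for \eqref{Vapprox} are unavailable. The standard remedy is to split each expectation across $\{N \ge cn^*\}$ and $\{N < cn^*\}$, invoking Anscombe on the first set and polynomial tail decay of $\Pr(N < cn^*)$ on the second. Securing an honest $O(b^{3/2})$ error rather than the weaker $o(b)$ will require both the upper-tail uniform integrability already shown in Theorem \ref{Thm1} and a sharp lower-tail counterpart, strong enough to drive the second-order expansion of $\E[1/N]$ on which the whole argument rests.
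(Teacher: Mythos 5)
Your proposal is correct in outline and converges on the same final step as the paper---a Taylor expansion of $g(\hat{\beta}_N)$ about $\beta$ followed by control of the three contributions in the analogue of \eqref{delta}---but it reaches the key random-sample-size limit theory by a different route. The paper rewrites the stopping rule \eqref{SR} in terms of the observed Fisher information for the reparameterization $\lambda = g(\beta)$, so that it matches the framework of \cite{Yu (1989)}, from which the asymptotic normality of $\sqrt{N}(\hat{\beta}_N-\beta)$ and of $\sqrt{N}(g(\hat{\beta}_N)-g(\beta))$ is imported wholesale; the needed uniform integrability is then obtained for $\{\hat{\beta}_N^q\}$ via the bound $\hat{\beta}_N^q \le \sup_{n\ge m}\hat{\beta}_n^q$ and Wiener's (\citeyear{Wiener (1939)}) ergodic theorem. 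You instead invoke Anscombe's random central limit theorem together with two-sided uniform integrability of $N/n^*$ and $n^*/N$, which is more self-contained and makes explicit what the citation to \cite{Yu (1989)} hides, but obliges you to prove a lower-tail bound on $\Pr(N<cn^*)$ that the paper never writes down. What your version buys is transparency about where the difficulty actually lies: as you correctly observe, $\bar{X}_N$ given $N$ is not $\Gamma(N\alpha,\beta/N)$, so the exact moment computations behind \eqref{Vapprox} do not transfer, and upgrading $\E[N(\hat{\beta}_N-\beta)^2]=\beta^2/\alpha+o(1)$ to $\E[(\hat{\beta}_N-\beta)^2]=\beta^2/(\alpha n^*)+O(n^{*-3/2})$ needs a genuine rate (e.g.\ $\E[N-n^*]=O(1)$ and control of $\E[(N^{-1}-n^{*-1})N(\hat{\beta}_N-\beta)^2]$), not just uniform integrability. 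Neither your sketch nor the paper's appeal to ``our previously used arguments leading to \eqref{Vapprox}'' actually closes that step, so your proposal is no less rigorous than the published proof; if you carry out the second-order moment expansion you allude to, your argument would in fact be the more complete of the two.
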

\begin{proof}
We set out to prove \eqref{avar} by appealing to \cite{Yu (1989)}. Note that for a random sample of the gamma population $\Gamma(\alpha,\beta)$, $X_1,...,X_n$, the Fisher information about $\beta$ is given by
$$I_{\beta,n}(\beta) = \E\left[ -\frac{d^2}{d\beta^2}\log\prod_{i}^{n}f(X_i;\beta) \right] = \frac{n\alpha}{\beta^2}.$$
Using $\lambda=g(\beta)$ to reparameterize the gamma population, the corresponding Fisher information with respect to $\lambda$ is
$$I_{\lambda,n}(\lambda) = I_{\beta,n}(\beta) \left(\frac{d \beta}{d \lambda}\right)^2 = \frac{n\alpha}{\{g^{\prime}(\beta)\}^{-2}\beta^2}.$$ 
Since the magnitude of $\beta$ is veiled, we use the observed Fisher information at $\hat{\beta}_n$ to estimate $I_{\lambda,n}(\lambda)$. Then, one has 
$$I_{\hat{\lambda}_n,n}(\hat{\lambda}_n) = \frac{n\alpha}{\{g^{\prime}(\hat{\beta}_n)\}^{-2}\hat{\beta}_n^2}.$$
The stopping rule \eqref{SR} can be rewritten as
\begin{equation}\label{SR2}
N = \inf\{ n \ge m: I_{\hat{\lambda}_n,n}(\hat{\lambda}_n) \le b \},
\end{equation}     
which matches the stopping rule proposed in \cite{Yu (1989)}. Therefore, one can immediately obtain that as $b \to 0$,
\begin{equation}\label{anorm1}
\sqrt{N}(g(\hat{\beta}_N)-g(\beta)) \overset{d}{\longrightarrow} N(0,\{g^{\prime}(\beta)\}^2\beta^2/\alpha),
\end{equation} 
and 
\begin{equation}\label{anorm2}
\sqrt{N}(\hat{\beta}_N-\beta) \overset{d}{\longrightarrow} N(0,\beta^2/\alpha).
\end{equation} 
Observe that for any $q \ge 1$, $\hat{\beta}_N^q \le \sup_{n\ge m} \hat{\beta}_n^q$, and $\E(\hat{\beta}_n^q) \sim \beta^{q} < \infty$. The uniform integrability of $\{ \hat{\beta}_N^q, 0< b \le 1 \}$ can be established through Wiener's (\citeyear{Wiener (1939)}) ergodic theorem. Apply Taylor's theorem, and we express
\begin{equation}
g(\hat{\beta}_N)-g(\beta) = g^{\prime}(\beta)(\hat{\beta}_N-\beta)+\frac{1}{2}g^{\prime\prime}(\kappa_N)(\hat{\beta}_N-\beta)^2,
\end{equation}   
where $\kappa_N$ is a random variable lying between $\hat{\beta}_N$ and $\beta$. Then, following our previously used arguments leading to \eqref{Vapprox}, we can claim that
$$\V[g(\hat{\beta}_N)] = b + O(b^{3/2}).$$ The proof is now complete.  
\end{proof}

\setcounter{equation}{0}
\section{Illustrations}\label{Illu}

In this section, we substitute the general $g(\beta)$ with the aforementioned practically useful functions for illustrative purposes. For each illustration, we specify the stopping rule and carry out Monte Carlo simulations to validate the theoretical findings empirically.

\subsection{Illustration 1: the gamma mean}

To estimate the gamma mean $g(\beta)=\alpha\beta$, we follow the general framework of sequential estimation established in Section \ref{Proc}, and propose the procedure ${\mathcal{P}}_1$ associated with the stopping rule given by
\begin{equation}\label{P1}
N_1 = \inf\left\{ n \ge m: n \ge \frac{\alpha\hat{\beta}_n^2}{b} \right\}.
\end{equation}

As was pointed out, $d=1,a_1=\alpha$ and $k_1=0$ in this case so that any pilot sample size $m\ge1$ will satisfy the condition \eqref{momcond}. Next, note that the stopping rule can be alternatively represented by
\begin{equation}\label{P11}
N_1 = \inf\left\{ n \ge m: \sum_{i=1}^{n}W_i \le (\alpha b)^{1/2}\beta^{-1}n^{3/2} \right\},
\end{equation}
where $W_1,...,W_n$ are independent and identically distributed (i.i.d.) $\Gamma(\alpha,1)$ random variables. It then matches the stopping rule of \cite{Woodroofe (1977)}. Therefore, in addition to the asymptotic first-order efficiency as per \eqref{eff}, we have the following second-order approximation by appealing to the nonlinear renewal theory: for $m>4/\alpha$,
\begin{equation}\label{eff1}
\E[N_1-n_1^*] = \frac{1}{2}-\frac{1}{\alpha}-\frac{2}{\alpha}\sum_{n=1}^{\infty}\frac{1}{n}\E\left( \sum_{i=1}^{n}W_i-\frac{3}{2}\alpha n \right)^++o(1)
\end{equation} 
as $b \to 0$, where $n_1^*=\alpha\beta^2/b$ is the corresponding optimal sample size. One may refer to Lai and Siegmund (\citeyear{Lai and Siegmund (1977)}, \citeyear{Lai and Siegmund (1979)}) and \cite{Mukhopadhyay and de Silva (2009)} for more background details.

An extensive set of Monte Carlo simulations have been conducted to investigate the performance of the sequential estimation procedure $\mathcal{P}_1$. We generated pseudorandom samples from a gamma population with $\alpha=2$ and $\beta=2$, but pretended that $\beta$ was unknown. In this case, the second-order expansion in \eqref{eff1} is computable, and a separate R program was written to approximate it numerically. In the spirits of \citet[Table 3.8.1]{Mukhopadhyay and Solanky (1994)}, we excluded any term smaller than $10^{-15}$ in magnitude in the infinite sum, and obtained that $\E[N_1-n_1^*] \approx -0.5680$. 

Then, we fixed the pilot sample size $m=20$. A wide range of values of $b$ including $0.160,0.080,0.040,0.016,0.008$ and $0.004$ were considered, so that $n^*_1$ turned out 50, 100 (small), 200, 500 (moderate), and 1000, 2000 (large) accordingly. The procedure was repeated for a total of 10,000 times. In Table \ref{Tab. 1}, we record the average final sample size $\bar{n}_1$ with the associated standard deviation $s(n_1)$, the first-order approximation $\bar{n}_1/n_1^*$, the second-order approximation $\bar{n}_1-n^*_1$, the variance of the estimates $\V[g(\hat{\beta})_{n_1}]$, and the ratio of $\V[g(\hat{\beta}_{n_1})]$ to $b$.    

\begin{table}[h!] 
\small
\captionsetup{font=small}
\caption{Simulations from $\Gamma(2,2)$ with $m=20$ under $10,000$ runs implementing the sequential estimation procedure $\mathcal{P}_1$}
\label{Tab. 1}\par
\vskip .2cm
\centerline{\tabcolsep=3truept\begin{tabular}{cccccccc} \hline 
$b$ & $n_1^*$ & $\bar{n}_1$ & $s(n_1)$ & $\bar{n}_1/n_1^*$ & $\bar{n}_1-n_1^*$ & $\V(g(\hat{\beta}_{n_1}))$ & $\V(g(\hat{\beta}_{n_1}))/b$ \\ \hline
\multicolumn{1}{r}{$0.160$} & \multicolumn{1}{r}{$50$} & \multicolumn{1}{r}{$49.2343$} & \multicolumn{1}{r}{$10.3368$} & \multicolumn{1}{c}{$0.9847$} & \multicolumn{1}{c}{$-0.7657$} & \multicolumn{1}{c}{$0.187391$} & \multicolumn{1}{c}{$1.171194$}\\
\multicolumn{1}{r}{$0.080$} & \multicolumn{1}{r}{$100$} & \multicolumn{1}{r}{$99.4338$} & \multicolumn{1}{r}{$14.4372$} & \multicolumn{1}{c}{$0.9943$} & \multicolumn{1}{c}{$-0.5662$} & \multicolumn{1}{c}{$0.087127$} & \multicolumn{1}{c}{$1.089093$}\\
\multicolumn{1}{r}{$0.040$} & \multicolumn{1}{r}{$200$} & \multicolumn{1}{r}{$199.6266$} & \multicolumn{1}{r}{$20.1623$} & \multicolumn{1}{c}{$0.9981$} & \multicolumn{1}{c}{$-0.3734$} & \multicolumn{1}{c}{$0.041503$} & \multicolumn{1}{c}{$1.037576$}\\
\multicolumn{1}{r}{$0.016$} & \multicolumn{1}{r}{$500$} & \multicolumn{1}{r}{$499.3923$} & \multicolumn{1}{r}{$31.8152$} & \multicolumn{1}{c}{$0.9988$} & \multicolumn{1}{c}{$-0.6077$} & \multicolumn{1}{c}{$0.016356$} & \multicolumn{1}{c}{$1.022236$}\\
\multicolumn{1}{r}{$0.008$} & \multicolumn{1}{r}{$1000$} & \multicolumn{1}{r}{$999.0571$} & \multicolumn{1}{r}{$44.8184$} & \multicolumn{1}{c}{$0.9991$} & \multicolumn{1}{c}{$-0.9429$} & \multicolumn{1}{c}{$0.008074$} & \multicolumn{1}{c}{$1.009238$}\\
\multicolumn{1}{r}{$0.004$} & \multicolumn{1}{r}{$2000$} & \multicolumn{1}{r}{$1999.6929$} & \multicolumn{1}{r}{$62.8701$} & \multicolumn{1}{c}{$0.9998$} & \multicolumn{1}{c}{$-0.3071$} & \multicolumn{1}{c}{$0.003958$} & \multicolumn{1}{c}{$0.989538$}\\
\hline
\end{tabular}}
\end{table}

From Table \ref{Tab. 1}, it is clear that as $b(n^*_1)$ gets smaller (larger), the first-order efficiency term $\bar{n}_1/n_1^*$ approaches 1, and the second-order efficiency term $\bar{n}_1-n_1^*$ is comparable to $-0.5680$. Across the board, the variance estimate $\V(g(\hat{\beta}_{n_1}))$ is close to the target $b$, and the ratio $\V(g(\hat{\beta}_{n_1}))/b$ is comparable to 1. These results show that the developed sequential estimation procedure $\mathcal{P}_1$ has performed remarkably well. 

\subsection{Illustration 2: the gamma variance}

In a parallel fashion, we propose the sequential procedure $\mathcal{P}_2$ as follows to estimate the gamma variance $g(\beta)=\alpha\beta^2$.
\begin{equation}\label{P2}
N_2 = \inf\left\{ n \ge m: n \ge \frac{4\alpha\hat{\beta}_n^4}{b} \right\}.
\end{equation}

As was pointed out, $d=1,a_1=2\alpha$ and $k_1=1$ in this instance so that any pilot sample size $m\ge1$ will satisfy the condition \eqref{momcond}. Similarly, a rewrite of the stopping rule \eqref{P2} will bring it in line with that of \cite{Woodroofe (1977)}.
\begin{equation}\label{P21}
N_1 = \inf\left\{ n \ge m: \sum_{i=1}^{n}W_i \le (\alpha^3 b/4)^{1/4}\beta^{-1}n^{5/4} \right\},
\end{equation}
where $W_1,...,W_n$ are i.i.d. $\Gamma(\alpha,1)$ random variables. Therefore, we can claim the second-order approximation herein: for $m>4/\alpha$,
\begin{equation}\label{eff2}
\E[N_2-n_2^*] = \frac{1}{2}-\frac{2}{\alpha}-\frac{4}{\alpha}\sum_{n=1}^{\infty}\frac{1}{n}\E\left( \sum_{i=1}^{n}W_i-\frac{5}{4}\alpha n \right)^++o(1)
\end{equation} 
as $b \to 0$, where $n_2^*=4\alpha\beta^4/b$ is the corresponding optimal sample size. 

To conduct Monte Carlo simulations, we generated pseudorandom samples from a gamma population with $\alpha=2$ and $\beta=1$. In this situation, a numerical approximation of the second-order expansion in \eqref{eff2} yields $\E[N_2-n_2^*] \approx -3.5258$. Again, we fixed the pilot sample size $m=20$, and considered a wide range of values of $b$ including $0.160,0.080,0.040,0.016,0.008$ and $0.004$ so that $n^*_2$ turned out 50, 100 (small), 200, 500 (moderate), and 1000, 2000 (large) accordingly. After running the procedure 10,000 times independently, we summarize the results in Table \ref{Tab. 2}.    

\begin{table}[h!] 
\small
\captionsetup{font=small}
\caption{Simulations from $\Gamma(2,1)$ with $m=20$ under $10,000$ runs implementing the sequential estimation procedure $\mathcal{P}_2$}
\label{Tab. 2}\par
\vskip .2cm
\centerline{\tabcolsep=3truept\begin{tabular}{cccccccc} \hline 
$b$ & $n_2^*$ & $\bar{n}_2$ & $s(n_2)$ & $\bar{n}_2/n_2^*$ & $\bar{n}_2-n_2^*$ & $\V(g(\hat{\beta}_{n_2}))$ & $\V(g(\hat{\beta}_{n_2}))/b$ \\ \hline
\multicolumn{1}{r}{$0.160$} & \multicolumn{1}{r}{$50$} & \multicolumn{1}{r}{$46.9295$} & \multicolumn{1}{r}{$18.8175$} & \multicolumn{1}{c}{$0.9386$} & \multicolumn{1}{c}{$-3.0705$} & \multicolumn{1}{c}{$0.178606$} & \multicolumn{1}{c}{$1.116289$}\\
\multicolumn{1}{r}{$0.080$} & \multicolumn{1}{r}{$100$} & \multicolumn{1}{r}{$95.2952$} & \multicolumn{1}{r}{$30.3808$} & \multicolumn{1}{c}{$0.9530$} & \multicolumn{1}{c}{$-4.7048$} & \multicolumn{1}{c}{$0.114880$} & \multicolumn{1}{c}{$1.435996$}\\
\multicolumn{1}{r}{$0.040$} & \multicolumn{1}{r}{$200$} & \multicolumn{1}{r}{$195.9255$} & \multicolumn{1}{r}{$42.3965$} & \multicolumn{1}{c}{$0.9796$} & \multicolumn{1}{c}{$-4.0745$} & \multicolumn{1}{c}{$0.051403$} & \multicolumn{1}{c}{$1.285075$}\\
\multicolumn{1}{r}{$0.016$} & \multicolumn{1}{r}{$500$} & \multicolumn{1}{r}{$496.2600$} & \multicolumn{1}{r}{$63.8226$} & \multicolumn{1}{c}{$0.9925$} & \multicolumn{1}{c}{$-3.7400$} & \multicolumn{1}{c}{$0.016814$} & \multicolumn{1}{c}{$1.050885$}\\
\multicolumn{1}{r}{$0.008$} & \multicolumn{1}{r}{$1000$} & \multicolumn{1}{r}{$995.5808$} & \multicolumn{1}{r}{$90.1453$} & \multicolumn{1}{c}{$0.9956$} & \multicolumn{1}{c}{$-4.4192$} & \multicolumn{1}{c}{$0.008248$} & \multicolumn{1}{c}{$1.031040$}\\
\multicolumn{1}{r}{$0.004$} & \multicolumn{1}{r}{$2000$} & \multicolumn{1}{r}{$1996.9680$} & \multicolumn{1}{r}{$126.3538$} & \multicolumn{1}{c}{$0.9985$} & \multicolumn{1}{c}{$-3.0320$} & \multicolumn{1}{c}{$0.004017$} & \multicolumn{1}{c}{$1.004328$}\\
\hline
\end{tabular}}
\end{table}

\subsection{Illustration 3: the rate parameter}

To estimate the rate parameter of a gamma distribution given as $g(\beta)=\beta^{-1}$, we propose the following sequential procedure $\mathcal{P}_3$:
\begin{equation}\label{P3}
N_3 = \inf\left\{ n \ge m: n \ge (\alpha b)^{-1}\hat{\beta}_n^{-2} \right\}.
\end{equation}

As was pointed out, $d=1,a_1=2$ and $k_1=-3$ so that any pilot sample size $m>6/\alpha$ will satisfy the condition \eqref{momcond}. In this case, the optimal sample size is $n^*_3=(\alpha b)^{-1}\beta^{-2}$. However, the stopping rule \eqref{P3} can hardly be rewritten following that of \cite{Woodroofe (1977)} due to the negative moment of $\hat{\beta}_n$. As a consequence, we only carry out an empirical second-order analysis via simulations here. 

Monte Carlo simulations have been conducted to demonstrate the remarkable performance of the sequential estimation procedure $\mathcal{P}_3$. We generated pseudorandom observations from a gamma distribution with $\alpha=2$ and $\beta=1$. We set the pilot sample size $m=20$ so that it was large enough. We took into consideration the values of $b$ including $0.010$, $0.005$, $0.0025$, $0.0010$, $0.0005$, and $0.00025$, respectively, so that again the optimal sample size $n^*_3$ could be determined as 50, 100 (small), 200, 500 (moderate), and 1000, 2000 (large) accordingly. The summaries from 10,000 independent runs are recorded in Table \ref{Tab. 3}.

\begin{table}[h!] 
\small
\captionsetup{font=small}
\caption{Simulations from $\Gamma(2,1)$ with $m=20$ under $10,000$ runs implementing the sequential estimation procedure $\mathcal{P}_3$}
\label{Tab. 3}\par
\vskip .2cm
\centerline{\tabcolsep=3truept\begin{tabular}{cccccccc} \hline 
$b$ & $n_3^*$ & $\bar{n}_3$ & $s(n_3)$ & $\bar{n}_3/n_3^*$ & $\bar{n}_3-n_3^*$ & $\V(g(\hat{\beta}_{n_3}))$ & $\V(g(\hat{\beta}_{n_3}))/b$ \\ \hline
\multicolumn{1}{r}{$0.010$} & \multicolumn{1}{r}{$50$} & \multicolumn{1}{r}{$50.8874$} & \multicolumn{1}{r}{$9.9734$} & \multicolumn{1}{c}{$1.0177$} & \multicolumn{1}{c}{$0.8874$} & \multicolumn{1}{c}{$0.010076$} & \multicolumn{1}{c}{$1.007606$}\\
\multicolumn{1}{r}{$0.005$} & \multicolumn{1}{r}{$100$} & \multicolumn{1}{r}{$100.9992$} & \multicolumn{1}{r}{$14.0009$} & \multicolumn{1}{c}{$1.0010$} & \multicolumn{1}{c}{$0.9992$} & \multicolumn{1}{c}{$0.004965$} & \multicolumn{1}{c}{$0.992902$}\\
\multicolumn{1}{r}{$0.0025$} & \multicolumn{1}{r}{$200$} & \multicolumn{1}{r}{$200.0173$} & \multicolumn{1}{r}{$19.8177$} & \multicolumn{1}{c}{$1.0051$} & \multicolumn{1}{c}{$1.0173$} & \multicolumn{1}{c}{$0.002467$} & \multicolumn{1}{c}{$0.986840$}\\
\multicolumn{1}{r}{$0.0010$} & \multicolumn{1}{r}{$500$} & \multicolumn{1}{r}{$500.9509$} & \multicolumn{1}{r}{$31.2984$} & \multicolumn{1}{c}{$1.0019$} & \multicolumn{1}{c}{$0.9509$} & \multicolumn{1}{c}{$0.000982$} & \multicolumn{1}{c}{$0.981624$}\\
\multicolumn{1}{r}{$0.0005$} & \multicolumn{1}{r}{$1000$} & \multicolumn{1}{r}{$1001.2121$} & \multicolumn{1}{r}{$44.5421$} & \multicolumn{1}{c}{$1.0012$} & \multicolumn{1}{c}{$1.2121$} & \multicolumn{1}{c}{$0.000497$} & \multicolumn{1}{c}{$0.993567$}\\
\multicolumn{1}{r}{$0.00025$} & \multicolumn{1}{r}{$2000$} & \multicolumn{1}{r}{$2000.2164$} & \multicolumn{1}{r}{$63.4808$} & \multicolumn{1}{c}{$1.0001$} & \multicolumn{1}{c}{$0.2164$} & \multicolumn{1}{c}{$0.000252$} & \multicolumn{1}{c}{$1.007719$}\\
\hline
\end{tabular}}
\end{table}

\subsection{Illustration 4: the survival probability}

One may also be interested in survival probability as a function of $\beta$, say $g(\beta)=\Pr(X>c)$, where $c$ is a positive constant. The corresponding sequential estimation procedure $\mathcal{P}_4$ can be then constructed by
\begin{equation}\label{P4}
N_4 = \inf\left\{ n \ge m: n \ge \frac{c^{2\alpha}\hat{\beta}_n^{-2\alpha}e^{-2c/\hat{\beta}_n}}{\{\Gamma(\alpha)\}^2\alpha b} \right\}.
\end{equation}

As was pointed out, $d=2$ and $\min\{k_1,k_2\}=-(\alpha+3)$ so that any pilot sample size $m>2+6/\alpha$ will satisfy the condition \eqref{momcond}. In this scenario, the optimal sample size is $n^*_4=\frac{c^{2\alpha}\beta^{-2\alpha}e^{-2c/\beta}}{\{\Gamma(\alpha)\}^2\alpha b}$. Similarly, we are not going to do a second-order analysis theoretically as the stopping rule \eqref{P4} can hardly be rewritten in the spirit of \cite{Woodroofe (1977)}. Instead, we investigate the second-order efficiency via simulations. 

For Monte Carlo simulations, we generated pseudorandom observations from a gamma distribution with $\alpha=2$ and $\beta=2$ to estimate the probability $\Pr(X>3)$, that is, the constant $c=3$. A pilot sample size of $m=20$ would be large enough in this situation. To make the optimal sample size $n_4^*$ range among the values including 50, 100 (small), 200, 500 (moderate), and 1000, 2000 (large), we computed the corresponding values of $b$ in turn. That is, $b=0.00252,0.00126,0.00063,0.000252,0.000126$ and $0.000063$. The simulated findings from 10,000 independent runs are summarized in Table \ref{Tab. 4}.

\begin{table}[h!] 
\small
\captionsetup{font=small}
\caption{Simulations from $\Gamma(2,2)$ with $c=3$ and $m=20$ under $10,000$ runs implementing the sequential estimation procedure $\mathcal{P}_4$}
\label{Tab. 4}\par
\vskip .2cm
\centerline{\tabcolsep=3truept\begin{tabular}{cccccccc} \hline 
$b$ & $n_4^*$ & $\bar{n}_4$ & $s(n_4)$ & $\bar{n}_4/n_4^*$ & $\bar{n}_4-n_4^*$ & $\V(g(\hat{\beta}_{n_4}))$ & $\V(g(\hat{\beta}_{n_4}))/b$ \\ \hline
\multicolumn{1}{r}{$0.00252$} & \multicolumn{1}{r}{$50$} & \multicolumn{1}{r}{$50.0294$} & \multicolumn{1}{r}{$4.9539$} & \multicolumn{1}{c}{$1.0006$} & \multicolumn{1}{r}{$0.0294$} & \multicolumn{1}{c}{$0.002606$} & \multicolumn{1}{c}{$1.033928$}\\
\multicolumn{1}{r}{$0.00126$} & \multicolumn{1}{r}{$100$} & \multicolumn{1}{r}{$100.0929$} & \multicolumn{1}{r}{$7.0023$} & \multicolumn{1}{c}{$1.0009$} & \multicolumn{1}{r}{$0.0929$} & \multicolumn{1}{c}{$0.001265$} & \multicolumn{1}{c}{$1.003583$}\\
\multicolumn{1}{r}{$0.00063$} & \multicolumn{1}{r}{$200$} & \multicolumn{1}{r}{$200.1026$} & \multicolumn{1}{r}{$9.9768$} & \multicolumn{1}{c}{$1.0005$} & \multicolumn{1}{r}{$0.1026$} & \multicolumn{1}{c}{$0.000635$} & \multicolumn{1}{c}{$1.007231$}\\
\multicolumn{1}{r}{$0.000252$} & \multicolumn{1}{r}{$500$} & \multicolumn{1}{r}{$500.0495$} & \multicolumn{1}{r}{$15.7865$} & \multicolumn{1}{c}{$1.0001$} & \multicolumn{1}{r}{$0.0495$} & \multicolumn{1}{c}{$0.000252$} & \multicolumn{1}{c}{$1.000772$}\\
\multicolumn{1}{r}{$0.000126$} & \multicolumn{1}{r}{$1000$} & \multicolumn{1}{r}{$1000.1743$} & \multicolumn{1}{r}{$22.3803$} & \multicolumn{1}{c}{$1.0002$} & \multicolumn{1}{r}{$0.1743$} & \multicolumn{1}{c}{$0.000127$} & \multicolumn{1}{c}{$1.006751$}\\
\multicolumn{1}{r}{$0.000063$} & \multicolumn{1}{r}{$2000$} & \multicolumn{1}{r}{$1999.7089$} & \multicolumn{1}{r}{$31.6515$} & \multicolumn{1}{c}{$0.9999$} & \multicolumn{1}{r}{$-0.2911$} & \multicolumn{1}{c}{$0.000063$} & \multicolumn{1}{c}{$1.003911$}\\
\hline
\end{tabular}}
\end{table}

\subsection{Some overall sentiments}
From Tables \ref{Tab. 1}-\ref{Tab. 4} and many other simulations (with varying values of $\alpha,\beta,$ and $m$) that we have run, we may summarize our overall sentiments as follows.
\begin{enumerate}
    \item[(a)] Different choices of the underlying gamma population distribution and the pilot sample size $m$ have not impacted the overall performances of our proposed sequential estimation procedure \eqref{SR} in a significant way. 
    \item[(b)] For all of those four illustrations under consideration, we observe that both $\bar{n}_i/n^*_i$ and $\V(g(\hat{\beta}_{n_i}))/b$ are close to 1 across the board, $i=1,2,3,4$. It shows that our proposed sequential estimation procedure \eqref{SR} performs impressively.
    \item[(c)] For Illustrations 1 and 2, we have carried out second-order analyses by applying \citeauthor{Woodroofe (1977)}'s (\citeyear{Woodroofe (1977)}) nonlinear renewal theory. The simulations have empirically validated our theoretical results that as $b\to0$,
    $$\E[N_1-n^*_1]=-0.5680+o(1), \text{ and } \E[N_2-n^*_2]=-3.5258+o(1).$$
    \item[(d)] For Illustrations 3 and 4, \citeauthor{Woodroofe (1977)}'s (\citeyear{Woodroofe (1977)}) nonlinear renewal theory fails to work. However, the simulations have suggested that both $\lim_{b\to0}\E[N_3-n^*_3]$ and $\lim_{b\to0}\E[N_4-n^*_4]$ are bounded. To be specific, as $b\to0$,
    $$\E[N_3-n^*_3] \approx 1, \text{ and } \E[N_4-n^*_4] \approx 0.$$
    \item[(e)] For Illustrations 1-4, we have the following expressions of the optimal sample sizes as per \eqref{oss}:
    $$n_i^* = k_ib^{-1}, ~ i = 1,2,3,4,$$
    where $k_1=\alpha\beta^2$, $k_2=4\alpha\beta^4$, $k_3=\alpha^{-1}\beta^{-2}$, and $k_4=\{\Gamma(\alpha)\}^{-2}c^{2\alpha}\beta^{-2\alpha}e^{-2c/\beta}$. Consequently, the magnitudes of $n_i^*$'s depend on $\alpha,\beta,$ and $c$, and we cannot claim that one is uniformly larger or smaller than another. For example, when $\alpha=1/4$ and $\beta=2$, $n^*_2<n^*_3$; but when $\alpha=2$ and $\beta=2$, $n^*_2>n^*_3$. It indicates that for a fixed level $b$, even if we sample observations from the same gamma population, the sample sizes needed under various illustrations can be quite different. Analogously, for comparable optimal sample sizes, different levels of $b$ will be achieved under various illustrations, as is reflected in Tables \ref{Tab. 1}-\ref{Tab. 4}.
\end{enumerate}

\setcounter{equation}{0}
\section{Real data analysis}\label{RDA}

In this section, we implement the developed sequential techniques in real-world studies using two separate real data sets: (i) survival times data from a group of 97 Swiss female dementia patients diagnosed at age 70–74, referred to as the \textit{dementia data}; and (ii) the number of days that the seeds of marigold need to flower, referred to as the \textit{marigold data}.

\subsection{Analysis using dementia data}\label{Sect4.1}

The survival times (in years) after diagnosis of dementia in a group of 97 Swiss females with age at diagnosis 70-74 inclusive were given in Example 5.1 of \cite{Elandt-Johnson and Johnson (1999)}. We treat the dementia data as our population of size $n=97$, to which the exponential distribution provides a reasonably good fit as the Kolmogorov-Smirnov test returns a p-value of 0.2384. That is, the shape parameter $\alpha$ is known to be 1, but the scale parameter $\beta$ is assumed unknown. Recently, \cite{Zhuang et al. (2020)} studied this data set and constructed a fixed-accuracy confidence interval for the survival probability $\Pr(X>4.6)$, where $X$ denotes the survival time of a female dementia patient, and 4.6 years is the estimated median survival time from onset of dementia to death for women according to an analysis conducted by \cite{Xie et al. (2008)}. We denote the 97 survival times by $X_1,X_2,...,X_{97}$, and summarize the following parameters along with their definitions in Table \ref{Tab. 5}. 

\begin{table}[h!] 
\small
\captionsetup{font=small}
\caption{Dementia data parameters}
\label{Tab. 5}\par
\vskip .2cm
\centering
\begin{tabular}{lcccc}
\hline
Parameter & Mean $\mu$ & Variance $\sigma^2$ & Rate & $\Pr(X>4.6)$\\
\hline
Definition & $n^{-1}\sum_{i=1}^{n}X_i$ & $n^{-1}\sum_{i=1}^{n}(X_i-\mu)^2$ & $\mu^{-1}$ & $\sum_{i=1}^{n}I(X_i>4.6)$ \\
\hline
Value & 5.180 & 20.249 & 0.193 & 0.439\\
\hline
\end{tabular}
\end{table}

To implement the sequential estimation procedures \eqref{P1}, \eqref{P2}, \eqref{P3}, and \eqref{P4}, we first collect a pilot sample of size $m=20$, based on which we obtain preliminary estimates of the mean, the variance, the rate, and the survival probability $\Pr(X>4.6)$ to determine a reasonable level of $b$ for each procedure. Then, we sample one observation at a time successively until termination. The results are reported in Table \ref{Tab. 6}. We find that while all these four procedures lead to substantially smaller terminal sample sizes, the estimates are comparable with the population parameters.

\begin{table}[h!] 
\small
\captionsetup{font=small}
\caption{Analysis using the dementia data}
\label{Tab. 6}\par
\vskip .2cm
\centering
\begin{tabular}{lcccc}
\hline
Parameter & Procedure & $b$ & Terminal sample size & Estimate \\
\hline
Mean & \eqref{P1} & 0.5 & 50 & 4.960\\
Variance & \eqref{P2} & 50 & 39 & 22.020\\
Rate & \eqref{P3} & 0.001 & 42 & 0.204\\
$\Pr(X>4.6)$ & \eqref{P4} & 0.003 & 45 & 0.444\\
\hline
\end{tabular}
\end{table}

\subsection{Analysis using marigold data}\label{Sect4.2}

In a lot of cases, the underlying population is better approximated by a normal distribution rather than by a gamma distribution, and the interest lies in the estimation of the population variance with some desired precision. Based on the normal-gamma transformation proposed in \cite{Zacks and Khan (2011)}, if $X_1,...,X_n$ are i.i.d. normal random variables with mean $\mu$ and variance $\sigma^2$, then 
\begin{equation}\label{ngtrans}
Y_i=\frac{i}{i+1}(X_{i+1}-\bar{X}_{i})^2, \ 1 \le i \le n-1    
\end{equation} 
are independent random variables from a gamma distribution with a known shape $\alpha = 1/2$ and an unknown scale $\beta = 2\sigma^2$. In this sense, estimating the unknown normal variance $\sigma^2$ using $X_1,...,X_n$ is equivalent to estimating the gamma mean using $Y_1,...,Y_{n-1}$. Therefore, our proposed sequential estimation procedure \eqref{P1} helps to provide a satisfactory solution to this inference problem. 

For illustrative purposes, we analyze the marigold data of size $n=460$ collected by \cite{Mukhopadhyay et al. (2004)} on the number of days that the seeds of marigold variety 2 needed to flower. We treat this real data set as our population, which does not seem to contradict normality as is confirmed via the Shapiro-Wilk test with an associated p-value of 0.1473. The population variance is also computed for reference, which turns out 16.226.

We performed the Zacks and Khan's (\citeyear{Zacks and Khan (2011)}) transformation \eqref{ngtrans} onto the marigold data to get a random sample of 459 observations from a gamma population with $\alpha=1/2$ and some unknown $\beta$. From here, we assign $m=20$ and $b=2$, and proceeded the sequential estimation procedure \eqref{P1} to estimate the mean. The result is summarized in Table  \ref{Tab. 7}. Similarly, we find that the procedure leads to a substantially smaller sample size, but the estimate is close to the corresponding population parameter.

\begin{table}[h!] 
\small
\captionsetup{font=small}
\caption{Analysis using the marigold data}
\label{Tab. 7}\par
\vskip .2cm
\centering
\begin{tabular}{lcccc}
\hline
Parameter & Procedure & $b$ & Terminal sample size & Estimate \\
\hline
Mean & \eqref{P1} & 2 & 291 & 17.002\\
\hline
\end{tabular}
\end{table}

\section{Conclusions} \label{Conc}

In this paper, we have proposed a sequential sampling procedure for estimating $g(\beta)$, a general function of the scale parameter $\beta$ in a gamma distribution, with its variance bounded by a prefixed small level, given that the shape parameter $\alpha$ is known. Our work has enriched the existing literature in the following two aspects. 

First, it has extended the research of \cite{Mahmoudi et al. (2019)} which merely focused on the bounded risk point estimation of the scale parameter $\beta$ itself in a gamma distribution. Estimating a function $g(\beta)$ offers greater generality, and can better serve statistical inference problems where parameters other than $\beta$ are of interest, as is illustrated in Section \ref{Illu}.

Second, we have relaxed the conditions imposed on the function $g(\cdot)$ compared with \cite{Mukhopadhyay and Wang (2019)} and \cite{Mukhopadhyay (2021)}, so that more functions can be incorporated into the general framework we have formulated, such as the rate parameter $g(\beta)=\beta^{-1}$ and the survival probability $g(\beta)=\Pr(X>c)$. This helps to make our sequential estimation procedure more flexible in practical applications. Although the condition \eqref{C1} is not difficult to meet, it is still a sufficient condition alone for Theorems \ref{Thm1}-\ref{Thm2} to hold. There is a possibility to further loosen this condition so that our theoretical framework can be applied to an even wider range of functions and problems. We will follow this direction to conduct more research in the future.    

We have followed up with four interesting illustrations with the function $g(\beta)$ substituted with the gamma mean, the gamma variance, the gamma rate parameter, and a survival probability, respectively. Appealing first-order and/or second-order properties are exhibited. Both Monte Carlo simulations and real data analysis are conducted to demonstrate the remarkable performance of the procedures. In particular, Section \ref{Sect4.2} addresses an unusual application of our proposed sequential estimation procedure \eqref{P1}. Although the theoretical framework is established under an underlying gamma population distribution, it can be implemented to estimate an unknown normal variance with the variance of its estimator bounded by a prefixed small level.

\section*{Acknowledgements}

We would like to express our gratitude to the Editor-in-Chief, the Associate Editor, and the four reviewers for their invaluable contributions to the improvement of our manuscript. We are particularly grateful to the Associate Editor, whose recommendation of the relevant paper, \cite{Mukhopadhyay and Li (2024)}, is helpful in broadening our references.

\end{document}